\documentclass[10pt]{article}
%
\usepackage{amsmath}
\usepackage{amsthm}
\usepackage{amssymb}
\usepackage{enumerate}
\usepackage{epsf}
\usepackage{psfrag}
\usepackage{mathrsfs}
\DeclareGraphicsExtensions{.eps,.art,.ART,.ps}
%
%
%


\newcommand{\grad}{\operatorname{grad}}

%
%
%
\newtheorem{Thm}{Theorem}[section]



%
%
%
\begin{document}
\title{Test fields cannot destroy extremal black holes}
\author{Jos\'e Nat\'ario, Leonel Queimada and Rodrigo Vicente\\
{\small Center for Mathematical Analysis, Geometry and Dynamical Systems,} \\
{\small Mathematics Department, Instituto Superior T\'ecnico,} \\
{\small Universidade de Lisboa, Portugal}}
\date{}
\maketitle
\begin{abstract}
We prove that (possibly charged) test fields satisfying the null energy condition at the event horizon cannot overspin/overcharge extremal Kerr-Newman or Kerr-Newman-anti de Sitter black holes, that is, the weak cosmic censorship conjecture cannot be violated in the test field approximation. The argument relies on black hole thermodynamics (without assuming cosmic censorship), and does not depend on the precise nature of the fields. We also discuss generalizations of this result to other extremal black holes.
\end{abstract}
%
%
%
%
%
\section{Introduction}\label{section1}
In the wake of the proofs of the singularity theorems in general relativity \cite{P65, H67, HP70}, Penrose formulated the weak cosmic censorship conjecture \cite{P69, W97}, according to which, generically, the singularities resulting from gravitational collapse are hidden from the observers at infinity by a black hole event horizon. Penrose's expectation was that, independently of what might happen inside black holes, the evolution of the outside universe would proceed undisturbed.

To test this conjecture, Wald \cite{W74} devised a thought experiment to destroy extremal Kerr-Newman black holes, already on the verge of becoming naked singularities, by dropping charged and/or spinning test particles into the event horizon. Both him and subsequent authors \cite{TdFC76, Needham80} found that if the parameters of the infalling particle (energy, angular momentum, charge and/or spin) were suited to overspin/overcharge the black hole then the particle would not go in, in agreement with the cosmic censorship conjecture. Similar conclusions were reached by analyzing scalar and electromagnetic test fields propagating in extremal Kerr-Newman black hole backgrounds \cite{Semiz11, Toth12, DS13, Duztas14}. In this case, the fluxes of energy, angular momentum and charge across the event horizon were found to be always insufficient to overspin/overcharge the black hole. Some of these results have been extended to higher dimensions \cite{BCNR10} and also to the case when there is a negative cosmological constant \cite{GL15, RS14}.

More recently, it was noticed that Wald's thought experiment may produce naked singularities when applied to nearly extremal black holes \cite{Hubeny99, MS07, JS09, SS11}. However, in this case the perturbation cannot be assumed to be infinitesimal, and so backreaction effects have to be taken into account; when this is done, the validity of the cosmic censorship conjecture appears to be restored \cite{Hod08, BCK10, ZVPH13, SPAJ15, CBSM15}. It can also be argued that the third law of black hole thermodynamics \cite{BCH73}, for which there is some evidence \cite{Israel86, DN97, CLS10}, forbids subextremal black holes from ever becoming extremal, and so, presumably, from being destroyed. Nonetheless, this cannot be taken as a definitive argument, since, for instance, extremal Reissner-Nordstr\"om black holes can be formed by collapsing charged thin shells \cite{Boulware73}.

In this paper, we consider arbitrary (possibly charged) test fields propagating in extremal Kerr-Newman or Kerr-Newman-anti de Sitter (AdS) black hole backgrounds. Apart from ignoring their gravitational and electromagnetic backreaction, we make no further hypotheses on these fields: they could be any combination of scalar, vector or tensor fields, charged fluids, sigma models, elastic media, or other types of matter. This also includes test particles, since they can be seen as singular limits of continuous media \cite{GJ75, LGATN14}. We give a general proof that if the test fields satisfy the null energy condition at the event horizon then they cannot overspin/overcharge the  black hole. This is done by first establishing, in Section~\ref{section2}, a test field version of the second law of black hole thermodynamics for extremal Kerr-Newman or Kerr-Newman-AdS black holes (which does not assume cosmic censorship). We use this result in Section~\ref{section3}, together with the Smarr formula and the first law, to conclude the proof. This last step requires the black hole to be extremal, and cannot be extended to near-extremal black holes. In Section~\ref{section4} we discuss generalizations of our result to other extremal black holes, including higher dimensions and alternative theories of gravity.

We follow the conventions of \cite{MTW73, W84}; in particular, we use a system of units for which $c=G=1$.
\section{Second law for test fields}\label{section2}
In this section we prove that a version of the second law of black hole thermodynamics holds in the case of (possibly charged) test fields propagating on a background Kerr-Newman or Kerr-Newman-AdS black hole (either subextremal or extremal). This calculation is similar to the one in \cite{GW01}, but we do not assume cosmic censorship, i.e.\ we do not assume that the black hole is not destroyed by interacting with the test field.

We start by recalling the Kerr-Newman-AdS metric, given in Boyer-Lindquist coordinates by
\begin{align}
ds^2 = & - \frac{\Delta_r}{\rho^2}\left( dt - \frac{a \sin^2 \theta}{\Xi} d\varphi \right)^2 + \frac{\rho^2}{\Delta_r} dr^2 \nonumber \\
& + \frac{\rho^2}{\Delta_\theta} d \theta^2 + \frac{\Delta_\theta \sin^2\theta}{\rho^2}\left( a \, dt - \frac{r^2 + a^2}{\Xi} d\varphi \right)^2, \label{KNAdS}
\end{align}
where 
\begin{align}
& \rho^2 = r^2 + a^2 \cos^2 \theta; \\
& \Xi = 1 - \frac{a^2}{l^2}; \\
& \Delta_r = (r^2 + a^2)\left(1+\frac{r^2}{l^2}\right) - 2mr + q^2; \\
& \Delta_\theta = 1 - \frac{a^2}{l^2}\cos^2\theta
\end{align}
(see for instance \cite{CCK00}). Here $m$, $a$ and $q$ denote the mass, rotation and electric charge parameters, respectively. These parameters are related to the physical mass $M$, angular momentum $J$ and electric charge $Q$ by
\begin{equation}\label{physical}
M = \frac{m}{\Xi^2}, \qquad J = \frac{ma}{\Xi^2}, \qquad Q = \frac{q}{\Xi}.
\end{equation}
The cosmological constant is $\Lambda = - \frac{3}{l^2}$, and so the Kerr-Newman metric can be obtained by taking the limit $l \to +\infty$. To avoid repeating ourselves, we will present all calculations below for the Kerr-Newman-AdS metric only; the corresponding formulae for the Kerr-Newman metric can be easily retrieved by making $l \to +\infty$.

The Kerr-Newman-AdS metric, together with the electromagnetic $4$-potential
\begin{equation}
A = - \frac{qr}{\rho^2}\left(dt - \frac{a\sin^2\theta}{\Xi} d\varphi\right),
\end{equation}
is a solution of the Einstein-Maxwell equations with cosmological constant $\Lambda$. It admits a two-dimensional group of isometries, generated by the Killing vector fields $X = \frac{\partial}{\partial t}$ and $Y=\frac{\partial}{\partial \varphi}$. 

We consider arbitrary (possibly charged) test fields propagating in this background. Apart from ignoring their gravitational and electromagnetic backreaction, we make no further hypotheses on the fields: they could be any combination of scalar, vector or tensor fields, charged fluids, sigma models, elastic media, or other types of matter. Since the fields may be charged, their energy-momentum tensor $T$ satisfies the generalized Lorentz law\footnote{See the Appendix for a complete explanation of the origin and meaning of this equation.}
\begin{equation}\label{motion}
\nabla_\mu T^{\mu\nu} = F^{\nu\alpha} j_\alpha,
\end{equation}
where $F=dA$ is the Faraday tensor of the background electromagnetic field and $j$ is the charge current density $4$-vector associated to the test fields. Using the symmetry of $T$ and the Killing equation, 
\begin{equation}\label{Killing}
\nabla_\mu X_\nu + \nabla_\nu X_\mu = 0,
\end{equation}
we have
\begin{equation}
\nabla_\mu (T^{\mu\nu} X_\nu) = F^{\nu\alpha} j_\alpha X_\nu. \label{div1}
\end{equation}
On the other hand, using the charge conservation equation,
\begin{equation}\label{charge}
\nabla_\mu j^\mu = 0,
\end{equation}
we obtain
\begin{align}
\nabla_\mu (j^\mu A^\nu X_\nu) & =  j^\mu (\nabla_\mu A^\nu) X_\nu + j^\mu A^\nu \nabla_\mu X_\nu \nonumber \\
& = j^\mu (F_\mu^{\,\,\,\,\nu} + \nabla^\nu A_\mu) X_\nu - j^\mu A^\nu \nabla_\nu X_\mu \nonumber \\
& = F^{\mu\nu} j_\mu X_\nu + j_\mu (X^\nu \nabla_\nu A^\mu - A^\nu \nabla_\nu X^\mu). \label{div2}
\end{align}
Since $A$ is invariant under time translations, we have
\begin{equation}\label{Lie}
\mathcal{L}_X A = 0 \Leftrightarrow [X,A]= 0 \Leftrightarrow X^\nu \nabla_\nu A^\mu - A^\nu \nabla_\nu X^\mu = 0,
\end{equation}
and so from \eqref{div1} and \eqref{div2} we obtain
\begin{equation}
\nabla_\mu (T^{\mu\nu} X_\nu + j^\mu A^\nu X_\nu) = 0. \label{div3}
\end{equation}
This conservation law suggests that the total field energy on a given spacelike hypersurface $S$ extending from the black hole event horizon $\mathscr{H^+}$ to infinity (Figure~\ref{Penrose}) should be
\begin{equation}
E' = \int_S (T^{\mu\nu} + j^\mu A^\nu) X_\nu N_\mu dV_{3},
\end{equation}
where $N$ is the future-pointing unit normal to $S$. However, in the Kerr-Newman-AdS case the non-rotating observers at infinity are rotating with respect to the Killing vector field $X$ with angular velocity
\begin{equation}
\Omega_\infty = - \frac{a}{l^2},
\end{equation}
and so, as shown in \cite{Olea05}, the physical energy should be computed with respect to the non-rotating Killing vector field
\begin{equation}
K = X + \Omega_\infty Y = X - \frac{a}{l^2} Y,
\end{equation}
that is, the physical energy is actually
\begin{equation} \label{energy}
E = \int_S (T^{\mu\nu} + j^\mu A^\nu) K_\nu N_\mu dV_{3}.
\end{equation}
This correction was implemented for test particles in \cite{GL15}. The need for the corresponding correction in the calculation of the physical black hole mass has been stressed in \cite{GPP05, MO15}. Note that in the Kerr-Newman case $\Omega_\infty = 0$, and no correction is needed.

\begin{figure}[h!]
\begin{center}
\psfrag{i+}{$i^+$}
\psfrag{i0}{$i^0$}
\psfrag{i-}{$i^-$}
\psfrag{H}{$H$}
\psfrag{H+}{$\mathscr{H^+}$}
\psfrag{H-}{$\mathscr{H^-}$}
\psfrag{I+}{$\mathscr{I^+}$}
\psfrag{I-}{$\mathscr{I^-}$}
\psfrag{I}{$\mathscr{I}$}
\psfrag{S0}{$S_0$}
\psfrag{S1}{$S_1$}
\epsfxsize=0.9\textwidth
\leavevmode
\epsfbox{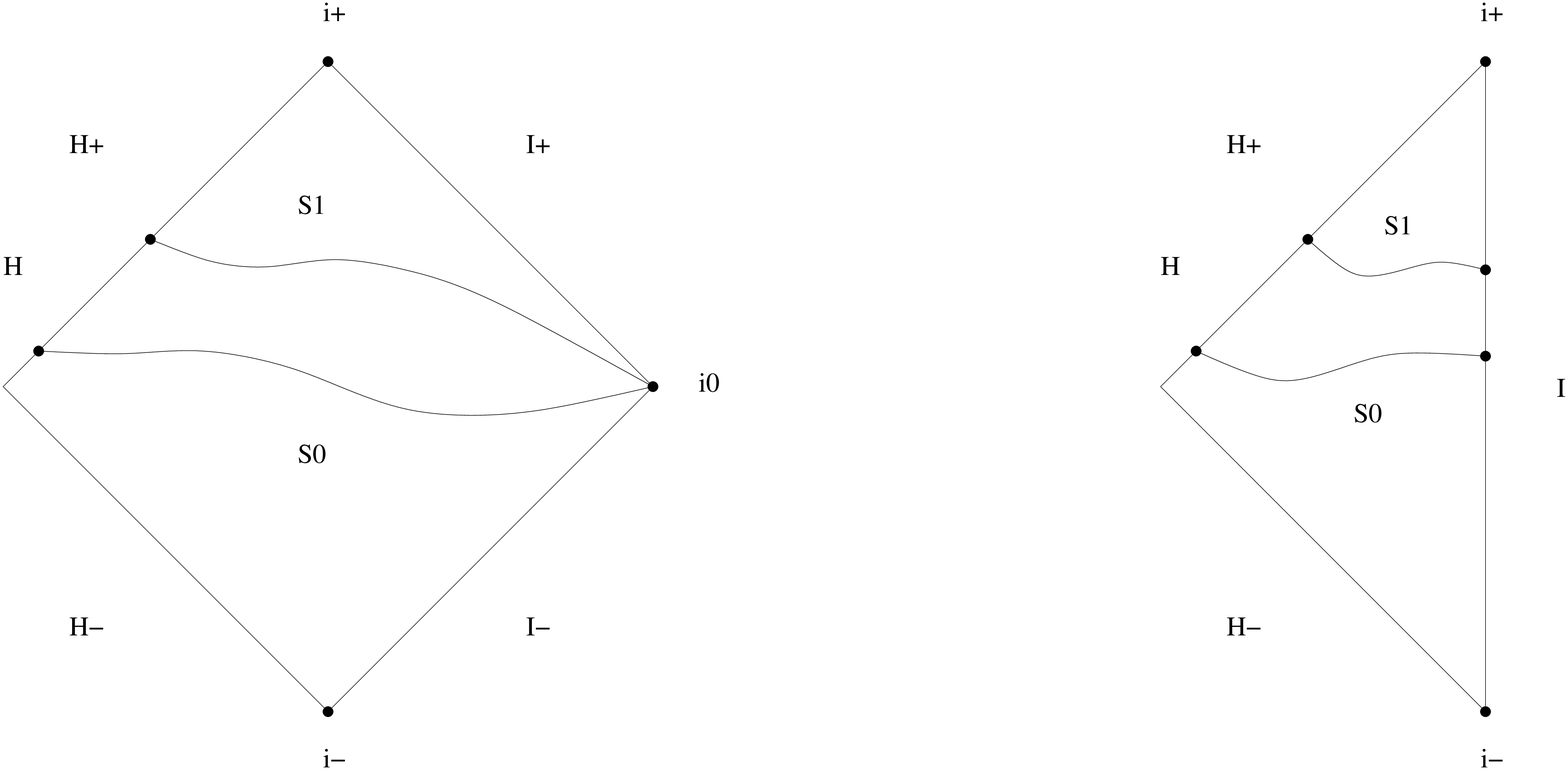}
\end{center}
\caption{Penrose diagrams for the region of outer communication of the Kerr-Newman (left) and Kerr-Newman-AdS (right) spacetimes.} \label{Penrose}
\end{figure}

Analogously, but now without ambiguity, the total field angular momentum on a spacelike hypersurface $S$ extending from the event horizon to infinity is
\begin{equation}
L = - \int_S (T^{\mu\nu} + j^\mu A^\nu) Y_\nu N_\mu dV_{3},
\end{equation}
where the minus sign accounts for the timelike unit normal. 

Consider now two such spacelike hypersurfaces, $S_0$ and $S_1$, with $S_1$ to the future of $S_0$ (Figure~\ref{Penrose}). We assume reflecting boundary conditions in the Kerr-Newman-AdS case, so that all fluxes vanish at infinity. The energy absorbed by the black hole across the subset $H$ of $\mathscr{H^+}$ between $S_0$ and $S_1$ is then
\begin{equation}
\Delta M = \int_{S_0} (T^{\mu\nu} + j^\mu A^\nu) K_\nu N_\mu dV_{3} - \int_{S_1} (T^{\mu\nu} + j^\mu A^\nu) K_\nu N_\mu dV_{3},
\end{equation}
whereas the angular momentum absorbed by the black hole across $H$ is
\begin{equation}
\Delta J = - \int_{S_0} (T^{\mu\nu} + j^\mu A^\nu) Y_\nu N_\mu dV_{3} + \int_{S_1} (T^{\mu\nu} + j^\mu A^\nu) Y_\nu N_\mu dV_{3}.
\end{equation}

Recall that the angular velocity of the black hole horizon is
\begin{equation}
\Omega_H = \frac{a \Xi}{r_+^2 +a^2},
\end{equation}
where $r_+$ is the largest root of $\Delta_r=0$. This means that the (future-pointing) Killing generator of $\mathscr{H^+}$ is
\begin{equation}
Z = X + \Omega_H Y = K + \Omega Y,
\end{equation}
where
\begin{equation}
\Omega = \Omega_H - \Omega_\infty
\end{equation}
is precisely the thermodynamic angular velocity, that is, the angular velocity that occurs in the first law for Kerr-Newman-AdS black holes \cite{GPP05}. Therefore, we have
\begin{equation}
\Delta M  - \Omega \Delta J = \int_{S_0} (T^{\mu\nu} + j^\mu A^\nu) Z_\nu N_\mu dV_{3} - \int_{S_1} (T^{\mu\nu} + j^\mu A^\nu) Z_\nu N_\mu dV_{3}.
\end{equation}
Because $Z$ is also a Killing vector field,
\begin{equation}
\nabla_\mu (T^{\mu\nu} Z_\nu + j^\mu A^\nu Z_\nu) = 0,
\end{equation}
and so the divergence theorem, applied to the region bounded by $S_0$, $S_1$ and $H$, yields
\begin{align} \label{Delta}
\Delta M  - \Omega \Delta J = \int_{H} (T^{\mu\nu} + j^\mu A^\nu) Z_\nu Z_\mu dV_{3}
\end{align}
(we use $-Z$ as the null normal\footnote{Recall that the divergence theorem on a Lorentzian manifold requires that the unit normal is outward-pointing when spacelike and inward-pointing when timelike. When the normal is null it is non-unique, and the volume element depends on the choice of normal; it should be past-pointing in the future null subset of the boundary, and future-pointing in the past null subset of the boundary.} on $H$).  Since on $\mathscr{H^+}$
\begin{equation}
A^\mu Z_\mu = - \frac{e r_+}{r_+^2 +a^2} = - \Phi,
\end{equation}
where $\Phi$ is the horizon's electric potential, we have
\begin{align}
\int_{H} j^\mu A^\nu Z_\nu Z_\mu dV_{3} = - \Phi \int_{H} j^\mu Z_\mu dV_{3}.
\end{align}
Using again the divergence theorem, this time together with the charge conservation equation \eqref{charge}, we obtain
\begin{align}
\int_{H} j^\mu A^\nu Z_\nu Z_\mu dV_{3} = -  \Phi \int_{S_0} j^\mu N_\mu  dV_{3} +  \Phi \int_{S_1} j^\mu N_\mu  dV_{3}.
\end{align}
Now the total charge on a spacelike hypersurface $S$ extending from the event horizon to infinity is
\begin{equation}
- \int_S j^\mu N_\mu dV_{3},
\end{equation}
where the minus sign accounts for the timelike unit normal. Therefore, denoting by $\Delta Q$ the electric charge absorbed by the black hole across $H$, we have
\begin{equation}
\int_{H} j^\mu A^\nu Z_\nu Z_\mu dV_{3} =  \Phi \Delta Q,
\end{equation}
and so equation \eqref{Delta} can then be written as
\begin{equation}
\Delta M  - \Omega \Delta J - \Phi \Delta Q  = \int_{H} (T^{\mu\nu} Z_\mu Z_\nu) dV_{3}.
\end{equation}

Since $Z$ is null on $H$, we have the following test field version of the second law of black hole thermodynamics. 

\begin{Thm}\label{Thm0}
If the energy-momentum tensor $T$ corresponding to any collection of test fields propagating on a Kerr-Newman or Kerr-Newman-AdS black hole background satisfies the null energy condition at the event horizon and appropriate boundary conditions at infinity then the energy $\Delta M$, angular momentum $\Delta J$ and electric charge $\Delta Q$ absorbed by the black hole satisfy
\begin{equation} \label{second}
\Delta M  \geq \Omega \Delta J + \Phi \Delta Q.
\end{equation}
\end{Thm}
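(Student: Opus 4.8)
The plan is to obtain the inequality directly from the flux identity
\begin{equation}
\Delta M - \Omega \Delta J - \Phi \Delta Q = \int_H T^{\mu\nu} Z_\mu Z_\nu \, dV_3
\end{equation}
derived immediately above the statement: all of the thermodynamic bookkeeping (the contributions of energy, angular momentum, and charge, together with the vanishing of the fluxes at infinity under the reflecting boundary conditions) has already been absorbed into its right-hand side, so it remains only to show that this integral is nonnegative.

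First I would invoke the null energy condition in the form $T_{\mu\nu} k^\mu k^\nu \geq 0$ for every null vector $k$. Since $Z = X + \Omega_H Y$ is the Killing generator of $\mathscr{H^+}$, it is null on the horizon, and hence on $H$; contracting $T$ with $Z$ therefore gives a pointwise nonnegative integrand $T^{\mu\nu} Z_\mu Z_\nu = T_{\mu\nu} Z^\mu Z^\nu \geq 0$. Provided the null volume element $dV_3$ is a positive measure, the integral is $\geq 0$ and the conclusion $\Delta M \geq \Omega \Delta J + \Phi \Delta Q$ follows at once.

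The only delicate point, and the step I expect to be the real content, is the sign of $dV_3$ on the null hypersurface $H$. Unlike the spacelike case, $dV_3$ on a null surface is defined only after a choice of normal, and the two natural choices differ by a sign; one must check that the orientation forced by the divergence theorem (with $-Z$ taken as the null normal, past-pointing on the future null boundary, as noted in the footnote) is precisely the one for which $dV_3 > 0$, so that no sign is lost in passing from the identity to the energy condition. Once this orientation is pinned down, there is nothing more to do: the theorem is an immediate corollary of the null energy condition applied to the already-established flux identity.
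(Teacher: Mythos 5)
Your proposal is correct and follows exactly the route the paper takes: the whole of Section~\ref{section2} is devoted to establishing the flux identity $\Delta M - \Omega \Delta J - \Phi \Delta Q = \int_H T^{\mu\nu} Z_\mu Z_\nu\, dV_3$, and the theorem is then stated as an immediate consequence of the null energy condition applied to the null generator $Z$ on $H$. The orientation issue you flag is precisely the one the paper disposes of in its footnote by taking $-Z$ (past-pointing on the future null boundary) as the normal, so there is nothing further to add.
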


It should be stressed that \eqref{second} is valid for extremal black holes, and it does not assume cosmic censorship, i.e.\ it does not assume that the Kerr-Newman-AdS metric with physical mass $M + \Delta M$, angular momentum $J + \Delta J$ and electric charge is $Q + \Delta Q$ represents a black hole rather than a naked singularity. Note that this scenario where the test fields interact with the geometry and change the values of the black hole charges is not in contradiction with the test field approximation, since the change is supposed to be infinitesimal.

%
%
%
\section{Proof of the result}\label{section3}
We can now prove our main result. 

\begin{Thm}\label{Thm1}
Test fields satisfying the null energy condition at the event horizon and appropriate boundary conditions at infinity cannot destroy extremal Kerr-Newman or Kerr-Newman-AdS black holes. More precisely, if an extremal black hole is characterized by the physical quantities $(M,J,Q)$, and absorbs energy, angular momentum and electric charge $(\Delta M,\Delta J,\Delta Q)$ by interacting with the test fields, then the metric corresponding to the physical quantities $(M+\Delta M, J + \Delta J, Q+\Delta Q)$ represents either a subextremal or an extremal black hole.
\end{Thm}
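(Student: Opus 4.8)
The plan is to reduce the claim to an inequality about an "extremality function" evaluated at the perturbed parameters, and then to show that Theorem~\ref{Thm0} (the second law) forces this function to have the correct sign. The key observation is that extremality is characterized by a double root of $\Delta_r=0$; I would encode this by introducing a function, say $f(M,J,Q)$, which is positive for subextremal black holes (where $\Delta_r$ has two distinct positive roots), zero for extremal ones, and negative when the metric is a naked singularity (where $\Delta_r$ has no positive roots). The statement to be proved is then precisely that $f(M+\Delta M, J+\Delta J, Q+\Delta Q)\geq 0$, given that $f(M,J,Q)=0$.

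Since the perturbation $(\Delta M,\Delta J,\Delta Q)$ is infinitesimal, the natural strategy is to linearize. To first order,
\begin{equation}
f(M+\Delta M, J+\Delta J, Q+\Delta Q) = \frac{\partial f}{\partial M}\,\Delta M + \frac{\partial f}{\partial J}\,\Delta J + \frac{\partial f}{\partial Q}\,\Delta Q,
\end{equation}
where the partials are evaluated at the extremal point and I have used $f(M,J,Q)=0$. The goal is then to show this linear combination is nonnegative. This is where the second law enters: Theorem~\ref{Thm0} gives $\Delta M \geq \Omega\,\Delta J + \Phi\,\Delta Q$. So if I can arrange that the gradient of $f$ at the extremal point is proportional, with a \emph{positive} factor, to the covector $(1,-\Omega,-\Phi)$, then
\begin{equation}
\frac{\partial f}{\partial M}\,\Delta M + \frac{\partial f}{\partial J}\,\Delta J + \frac{\partial f}{\partial Q}\,\Delta Q = \lambda\,(\Delta M - \Omega\,\Delta J - \Phi\,\Delta Q) \geq 0,
\end{equation}
with $\lambda>0$, which is exactly the second law. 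So the crux is a purely thermodynamic/geometric identity: at extremality, the gradient of the extremality function must be a positive multiple of $(1,-\Omega,-\Phi)$.

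To establish that identity I would exploit the Smarr formula and the first law of black hole thermodynamics, as the excerpt announces. The extremal condition that $\Delta_r$ have a double root $r_+$ means $\Delta_r(r_+)=0$ and $\partial_r\Delta_r(r_+)=0$ simultaneously; the second of these is equivalent to vanishing surface gravity (equivalently, zero Hawking temperature). The first law, written as $dM = T\,dS + \Omega\,dJ + \Phi\,dQ$, shows that along the extremal submanifold $T=0$, the differential of the entropy-constrained relation collapses so that the combination $dM - \Omega\,dJ - \Phi\,dQ = T\,dS$ vanishes precisely when $T=0$. The extremality function $f$ can be taken (up to a positive conformal factor) as a defining function for the zero-temperature locus, so its gradient is parallel to $d T$, and one checks via the Smarr relation that this is a positive multiple of $(1,-\Omega,-\Phi)$ at the extremal point.

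The main obstacle I anticipate is \textbf{making the linearization rigorous and controlling the sign} of the proportionality constant $\lambda$ (equivalently, verifying that $f$ increases in the subextremal direction rather than decreasing). This requires an explicit handle on $\partial f/\partial M$, $\partial f/\partial J$, $\partial f/\partial Q$ at extremality, which in turn means carefully computing how the double-root condition deforms as $(M,J,Q)$ vary---a computation complicated by the fact that the physical quantities $(M,J,Q)$ are related to the metric parameters $(m,a,q)$ through the nontrivial factors $\Xi$ in \eqref{physical}, and that $r_+$ itself is an implicit function of these parameters. I would handle this by working first in the parameters $(m,a,q)$, where $\Delta_r$ and its extremality condition are explicit polynomials, computing the gradient there, and only then changing variables to $(M,J,Q)$; the positivity of $\lambda$ should follow from $\partial_r^2\Delta_r(r_+)>0$ at a genuine (outermost) extremal horizon, together with positivity of the Jacobian of the parameter change, which must be checked to remain nondegenerate in the relevant range.
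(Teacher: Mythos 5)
Your proposal is correct and follows essentially the same route as the paper: linearize the extremality boundary in $(M,J,Q)$-space, use the first law with $\kappa=0$ at extremality to identify its normal direction with $(1,-\Omega,-\Phi)$, and then invoke Theorem~\ref{Thm0}. The explicit root-counting computation you anticipate as the main obstacle is sidestepped in the paper by writing the extremal locus as a graph $M=M_{\text{ext}}(J,Q)=M(A_{\text{ext}}(J,Q),J,Q)$ over $(J,Q)$, so that the extremality function is simply $M-M_{\text{ext}}(J,Q)$ with $\partial f/\partial M=1$ (your $\lambda$ equals $1$ automatically), and $dM_{\text{ext}}=\Omega\,dJ+\Phi\,dQ$ follows from the chain rule applied to the Smarr formula together with the first law.
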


\begin{proof}
The physical mass of a Kerr-Newman or Kerr-Newman-AdS black hole, given in~\eqref{physical}, is completely determined by the black hole's event horizon area $A$, angular momentum $J$ and electric charge $Q$ through a Smarr formula
\begin{equation} \label{one}
M=M(A,J,Q).
\end{equation}
From the first law of black hole thermodynamics, we know that this function satisfies
\begin{equation} \label{two}
dM = \frac{\kappa}{8 \pi} dA + \Omega dJ + \Phi dQ,
\end{equation}
where $\kappa$ is the surface gravity of the event horizon \cite{BCH73, CCK00, GPP05}. The condition for the black hole to be extremal is
\begin{equation}
\kappa = 0 \Leftrightarrow \frac{\partial M}{\partial A}(A,J,Q)=0,
\end{equation}
which can be solved to yield the area of an extremal black hole as a function of its angular momentum and charge,
\begin{equation}
A=A_{\text{ext}}(J,Q).
\end{equation}
The mass of an extremal black hole with angular momentum $J$ and electric charge $Q$ is then
\begin{equation}
M_{\text{ext}}(J,Q) = M(A_{\text{ext}}(J,Q),J,Q).
\end{equation}
A Kerr-Newman-AdS metric characterized by $M$, $J$ and $Q$ will represent a black hole if $M \geq M_{\text{ext}}(J,Q)$, and a naked singularity if $M < M_{\text{ext}}(J,Q)$. We have
\begin{align}
dM_{\text{ext}} & = \left(\frac{\partial M}{\partial A} \frac{\partial A_{\text{ext}}}{\partial J} + \frac{\partial M}{\partial J}\right) dJ + \left(\frac{\partial M}{\partial A} \frac{\partial A_{\text{ext}}}{\partial Q} + \frac{\partial M}{\partial Q}\right) dQ \nonumber \\
& = \left(\frac{\kappa}{8 \pi} \frac{\partial A_{\text{ext}}}{\partial J} + \Omega \right) dJ + \left(\frac{\kappa}{8 \pi} \frac{\partial A_{\text{ext}}}{\partial Q} + \Phi\right) dQ \nonumber \\
& = \Omega dJ + \Phi dQ, \label{dMext}
\end{align}
where all quantities are evaluated at the extremal black hole. 

Consider now an extremal black hole with angular momentum $J$, electric charge $Q$ and mass $M = M_{\text{ext}}(J,Q)$. After interacting with the test fields, its angular momentum is $J + \Delta J$, its electric charge is $Q + \Delta Q$ and its mass is, using \eqref{second} and \eqref{dMext},
\begin{align}
M + \Delta M & \geq M + \Omega \Delta J + \Phi \Delta Q \nonumber \\
& = M_{\text{ext}}(J,Q) + \Delta M_{\text{ext}} \nonumber \\
& = M_{\text{ext}}(J + \Delta J,Q + \Delta Q).
\end{align}
In other words, the final mass is above the mass of an extremal black hole with the same angular momentum and electric charge, meaning that the final metric does not represent a naked singularity, that is, the black hole has not been destroyed.
\end{proof}
%
%
%
\section{Discussion}\label{section4}
In this paper we proved that extremal Kerr-Newman or Kerr-Newman-AdS black holes cannot be destroyed by interacting with (possibly charged) test fields satisfying the null energy condition at the event horizon and appropriate boundary conditions at infinity. This includes as particular cases all previous results of this kind obtained for scalar and electromagnetic test fields \cite{Semiz11, Toth12, DS13, Duztas14}. The corresponding results for test particles \cite{W74, TdFC76, Needham80} can also be considered particular cases, since particles can be seen as singular limits of continuous media \cite{GJ75, LGATN14}. It is interesting to note that if the null energy condition is not satisfied then the weak cosmic censorship conjecture may indeed be violated, as shown in~\cite{Duztas15, Toth15} for Dirac fields.

Our proof depends only on certain generic features of the Kerr-Newman or Kerr-Newman-AdS metric, and can therefore be adapted to other black holes. In fact, Theorem~\ref{Thm1} can be generalized as follows.

\begin{Thm}\label{Thm}
Consider a family of charged and spinning black holes in some metric theory of gravity, with suitable asymptotic regions, and test fields propagating in these backgrounds, such that:
\begin{enumerate}
\item
There exists an asymptotically timelike Killing vector field $K$, determining the black hole's physical mass, and angular Killing vector fields $Y_i$, yielding the black hole's angular momenta, such that event horizon's Killing generator is
\begin{equation}
Z = K + \sum_i \Omega_i Y_i,
\end{equation}
where $\Omega_i$ are the thermodynamic angular velocities (that is, the angular velocities that occur in the first law).
\item
There exists a Smarr formula relating the black hole's physical mass $M$, its entropy $S$, its angular momenta $J_i$ and its electric charge $Q$,
\begin{equation} \label{one}
M=M(S,J_i,Q),
\end{equation}
yielding the first law of black hole thermodynamics,
\begin{equation}
dM = T dS + \sum_i \Omega_i dJ_i + \Phi dQ,
\end{equation}
where $T$ is the black hole temperature and $\Phi$ is the event horizon's electric potential.
\item
Extremal black holes (that is, black holes with $T=0$) are characterized by $M=M_{\text{ext}}(J_i,Q)$, and subextremal black holes by $M>M_{\text{ext}}(J_i,Q)$.
\item
The test fields satisfy the null energy condition at the event horizon and appropriate boundary conditions at infinity.
\end{enumerate}
Then the test fields cannot destroy extremal black holes. More precisely, if an extremal black hole is characterized by the physical quantities $(M,J_i,Q)$, and absorbs energy, angular momenta and electric charge $(\Delta M,\Delta J_i,\Delta Q)$ by interacting with the test fields, then the metric corresponding to the physical quantities $(M+\Delta M, J_i + \Delta J_i, Q+\Delta Q)$ represents either a subextremal or an extremal black hole.
\end{Thm}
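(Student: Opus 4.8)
The plan is to reproduce, in the abstract setting of conditions 1--4, the two ingredients that combined to give Theorem~\ref{Thm1}: a general second law, and the identity $dM_{\text{ext}} = \sum_i \Omega_i dJ_i + \Phi dQ$ characterizing the extremal surface in parameter space. The whole argument is then a transcription of Sections~\ref{section2} and~\ref{section3} with $A/4$ replaced by the entropy $S$ and the single angular velocity $\Omega$ replaced by the collection $\Omega_i$.

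First I would re-derive the second law. The computation of Section~\ref{section2} leading to \eqref{second} used only (i) the existence of the Killing fields $K$ and $Y_i$ assembling into the horizon generator $Z = K + \sum_i \Omega_i Y_i$, (ii) the conservation laws \eqref{motion} and \eqref{charge} together with the gauge invariance \eqref{Lie}, and (iii) the fact that $Z$ is null on $\mathscr{H^+}$ with $A^\mu Z_\mu = -\Phi$ constant there. Since condition~1 supplies exactly the Killing structure in (i) and (iii), the same divergence-theorem argument, applied to the region bounded by two spacelike slices $S_0, S_1$ and the horizon piece $H$ between them, yields
\begin{equation}
\Delta M - \sum_i \Omega_i \Delta J_i - \Phi \Delta Q = \int_H T^{\mu\nu} Z_\mu Z_\nu \, dV_3.
\end{equation}
Because $Z$ is null on $H$, condition~4 (the null energy condition) forces the right-hand side to be non-negative, giving the abstract second law $\Delta M \geq \sum_i \Omega_i \Delta J_i + \Phi \Delta Q$.

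Second I would run the thermodynamic argument. Condition~2 gives $M = M(S, J_i, Q)$ with $\frac{\partial M}{\partial S} = T$, so by condition~3 extremality is the locus $\frac{\partial M}{\partial S} = 0$, which I solve for $S = S_{\text{ext}}(J_i, Q)$ and use to define $M_{\text{ext}}(J_i, Q) = M(S_{\text{ext}}(J_i, Q), J_i, Q)$. Differentiating by the chain rule, every term carrying $\frac{\partial S_{\text{ext}}}{\partial J_i}$ or $\frac{\partial S_{\text{ext}}}{\partial Q}$ is multiplied by $\frac{\partial M}{\partial S} = T = 0$ at extremality and therefore drops out, leaving $dM_{\text{ext}} = \sum_i \Omega_i dJ_i + \Phi dQ$. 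Combining this with the second law exactly as before gives
\begin{equation}
M + \Delta M \geq M + \sum_i \Omega_i \Delta J_i + \Phi \Delta Q = M_{\text{ext}}(J_i, Q) + \Delta M_{\text{ext}} = M_{\text{ext}}(J_i + \Delta J_i, Q + \Delta Q),
\end{equation}
so by condition~3 the final parameters lie on or above the extremal surface, i.e.\ represent a subextremal or extremal black hole.

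The thermodynamic computation is purely formal once conditions~2 and~3 are granted, so the delicate point is the first step. Although condition~1 is phrased merely as the existence of the right Killing fields, the flux computation tacitly requires the electromagnetic sector to behave as in Section~\ref{section2} --- in particular $\mathcal{L}_K A = 0$ and $A^\mu Z_\mu = -\Phi$ constant on the horizon --- and, more importantly, it requires the Noether fluxes of $(T^{\mu\nu}+j^\mu A^\nu)K_\nu$, of $Y_{i\nu}$ and of $j^\mu$ across the horizon to coincide with the thermodynamic increments $\Delta M$, $\Delta J_i$, $\Delta Q$ appearing in the Smarr formula of condition~2. For the theories one has in mind this matching between conserved charges and first-law charges is guaranteed (typically by a Wald-type identity), but it is exactly the hypothesis that most needs checking when applying the theorem to a new example, and so it is where I expect the real obstacle to lie.
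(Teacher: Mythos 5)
Your proposal is correct and follows essentially the same route the paper intends: the paper states Theorem~\ref{Thm} without a written proof, remarking only that the argument of Sections~\ref{section2} and~\ref{section3} adapts verbatim, and your transcription (second law via the divergence theorem applied to $(T^{\mu\nu}+j^\mu A^\nu)Z_\nu$, then $dM_{\text{ext}}=\sum_i\Omega_i dJ_i+\Phi dQ$ from the first law at $T=0$) is exactly that adaptation. Your closing observation --- that the real content of the hypotheses is the identification of the horizon Noether fluxes with the thermodynamic increments $\Delta M$, $\Delta J_i$, $\Delta Q$ of the Smarr formula --- is a fair and useful sharpening of what the paper leaves implicit in conditions 1 and 2.
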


It is easy to check that this result applies to black holes in higher dimensions \cite{ER08}, including the case of a negative\footnote{Theorem~\ref{Thm} does not apply to the case of a positive cosmological constant, because the first hypothesis is not satisfied.} cosmological constant \cite{GPP05}. It can also be used for other black holes, like accelerated black holes with conical singularities \cite{Faraoni10} or black holes in alternative theories of gravity \cite{AGK16}. There is, however, no {\em a priori} reason why it should apply to arbitrary parametrized deformations of the Kerr metric \cite{CQ15}.
%
%
%
\section*{Acknowledgments}
JN was partially funded by FCT/Portugal through project PEst-OE/EEI/LA0009/2013. 
LQ gratefully acknowledges a scholarship from the Calouste Gulbenkian Foundation program {\em Novos Talentos em Matem\'atica}.
RV was supported by a graduate research fellowship from the FCT/Portugal project EXCL/MAT-GEO/0222/2012.
%
%
%
\section*{Appendix}
To obtain equation~\eqref{motion}, we observe that the charged test fields generate an extra electromagnetic field $f$ satisfying the Maxwell equations $df=0$ and
\begin{equation} \label{Maxwell}
\nabla_\mu f^{\mu\nu} = - j^\nu. 
\end{equation}
The total electromagnetic energy-momentum tensor is then
\begin{equation}
T^{EM}_{\mu\nu} = (F_{\mu\alpha}+f_{\mu\alpha})(F_{\nu}^{\,\,\,\,\alpha}+f_{\nu}^{\,\,\,\,\alpha}) - \frac14 g_{\mu\nu} (F_{\alpha\beta}+f_{\alpha\beta}) (F^{\alpha\beta}+f^{\alpha\beta}).
\end{equation}
Besides the stationary part, due solely to $F$, one has to consider, in the test field approximation, the cross terms
\begin{equation}\label{cross}
t_{\mu\nu} = f_{\mu\alpha} F_{\nu}^{\,\,\,\,\alpha} + F_{\mu\alpha} f_{\nu}^{\,\,\,\,\alpha} - \frac12 g_{\mu\nu} F_{\alpha\beta} f^{\alpha\beta}.
\end{equation}
We have
\begin{align}
\nabla^\mu t_{\mu\nu} & = - j_\alpha F_{\nu}^{\,\,\,\,\alpha} + f_{\mu\alpha} \nabla^\mu F_{\nu}^{\,\,\,\,\alpha} + F_{\mu\alpha} \nabla^\mu f_{\nu}^{\,\,\,\,\alpha} - \frac12 (\nabla_\nu F_{\alpha\beta}) f^{\alpha\beta} - \frac12 F_{\alpha\beta} \nabla_\nu  f^{\alpha\beta} \nonumber \\
& = - F_{\nu\alpha} j^{\alpha}, \label{tmunu}
\end{align}
where we used \eqref{Maxwell}, the Maxwell equation $\nabla^\mu F_{\mu\alpha} = 0$, the fact that
\begin{equation}
f_{\mu\alpha} \nabla^\mu F^{\nu\alpha} - \frac12 (\nabla^\nu F_{\alpha\beta}) f^{\alpha\beta} = \frac12 f_{\alpha\beta} \left( \nabla^\alpha F^{\nu\beta} + \nabla^\beta F^{\alpha\nu} - \nabla^\nu F^{\alpha\beta}\right) = 0
\end{equation}
(because of the Maxwell equation $dF=0$), and the fact that
\begin{equation}
F_{\mu\alpha} \nabla^\mu f^{\nu\alpha} - \frac12 F_{\alpha\beta} \nabla^\nu f^{\alpha\beta} = \frac12 F_{\alpha\beta} \left( \nabla^\alpha f^{\nu\beta} + \nabla^\beta f^{\alpha\nu} - \nabla^\nu f^{\alpha\beta}\right) = 0
\end{equation}
(because of the Maxwell equation $df=0$). Therefore, in the test field approximation, we have
\begin{equation}
\nabla^\mu \left(T_{\mu\nu} + T^{EM}_{\mu\nu} \right) = 0 \Leftrightarrow \nabla^\mu \left(T_{\mu\nu} + t_{\mu\nu} \right) = 0  \Leftrightarrow \nabla^\mu T_{\mu\nu} = F_{\nu\alpha} j^{\alpha},
\end{equation}
which is equation~\eqref{motion}.

One may wonder why not use the conserved current
\begin{equation}
\nabla_\mu (T^{\mu\nu} K_\nu + t^{\mu\nu} K_\nu) = 0
\end{equation}
to define the energy of the test field as
\begin{equation} \label{energy1}
E'' = \int_S (T^{\mu\nu} + t^{\mu\nu}) K_\nu N_\mu dV_{3}.
\end{equation}
The reason is that this expression accounts for the energy of the interaction between the charged field and the background electromagnetic field through the electromagnetic cross terms \eqref{cross}, whereas \eqref{energy} localizes it on the charges. As is well known, the physical mass of a charged black hole includes the energy of its background electromagnetic field; when charge enters the black hole, the interaction energy should be transferred from the energy of the electromagnetic field to the black hole's mass. This accounting is accomplished by \eqref{energy}, but not\footnote{As a toy model of this situation, consider a distribution of test charges with density $\rho$ on a background electrostatic field ${\bf E}=-\grad \phi$ generated by a closed surface kept at a constant potential $\Phi$. Using Gauss's law, it is easily seen that the total electrostatic energy outside the surface is $\int_{\text{out}} \rho \phi = \int_{\text{out}} {\bf E} \cdot {\bf e} - q_{\text{in}} \Phi$, where ${\bf e}$ is the electric field generated by the test charges and $q_{\text{in}}$ is the total test charge inside the surface.} by \eqref{energy1}.

One might also worry that the presence of the extra energy-momentum tensor $t$ with nonzero divergence \eqref{tmunu} could invalidate our previous conclusions. That is not the case, however, because $t$ does not contribute to the flux across the horizon. In fact, using \eqref{Lie} and the Killing equation \eqref{Killing}, we have
\begin{align}
& \int_H t_{\mu\nu} Z^\mu Z^\nu = \int_H  2 f_{\mu}^{\,\,\,\,\alpha} F_{\nu\alpha} Z^\mu Z^\nu = \int_H  2 f_{\mu}^{\,\,\,\,\alpha} (\nabla_\nu A_\alpha - \nabla_\alpha A_\nu) Z^\mu Z^\nu \nonumber \\
& = \int_H  2 f_{\mu}^{\,\,\,\,\alpha} (A^\nu \nabla_\nu Z_\alpha - Z^\nu \nabla_\alpha A_\nu) Z^\mu = \int_H 2 f_{\mu}^{\,\,\,\,\alpha} (- A^\nu \nabla_\alpha Z_\nu - Z^\nu \nabla_\alpha A_\nu) Z^\mu \nonumber \\
& = - \int_H  2 Z^\mu f_{\mu}^{\,\,\,\,\alpha} \nabla_\alpha (A^\nu Z_\nu) = 0,
\end{align}
since the vector field $Z^\mu f_{\mu}^{\,\,\,\,\alpha}$ is tangent to the event horizon,
\begin{equation}
Z^\mu f_{\mu}^{\,\,\,\,\alpha} Z_\alpha = 0,
\end{equation}
and $A^\nu Z_\nu = \Phi$ is constant along the event horizon.
%
%
%

\end{document}